\newtheorem{Theorem}{Theorem}[section]
\newtheorem{Remark}[Theorem]{Remark}
\newtheorem{Definition}[Theorem]{Definition}
\begin{document}
\title{Approximate Noether symmetries\\ of perturbed Lagrangians\\
and approximate conservation laws}

\author{M.~Gorgone, F.~Oliveri\\
\ \\
{\footnotesize Department of Mathematical and Computer Sciences,}\\
{\footnotesize Physical Sciences and Earth Sciences, University of Messina}\\
{\footnotesize Viale F. Stagno d'Alcontres 31, 98166 Messina, Italy}\\
{\footnotesize matteo.gorgone@unime.it; francesco.oliveri@unime.it}
}

\date{Published in \emph{Mathematics}, \textbf{9}, 2900, 2021.}

\maketitle

\begin{abstract}
In this paper, within the framework of the consistent approach recently introduced for approximate Lie symmetries of differential equations, we consider approximate Noether symmetries of variational problems involving small terms. 
Then, we state an approximate Noether theorem leading to the construction of approximate conservation laws. Some illustrative applications are~presented.
\end{abstract}

\noindent
\textbf{Keywords.}  Approximate Lie symmetries; Perturbed Lagrangians;\\ Approximate Noether theorem; Approximate conservation laws

\noindent
\textbf{MSC (2010).} {34A05; 35-04; 58J70}

\section{Introduction}
Methods of Lie theory of continuous transformations~\cite{Ovsiannikov1982book,Olver1986book,Ibragimov1994CRC1,Ibragimov1995CRC2,Ibragimov1996CRC3,BlumanAnco2002book,BlumanCheviakovAnco2009book} yield a general framework for deeply investigating both ordinary and partial differential equations. In the case of differential equations deduced from a Lagrangian function through a variational technique~\cite{Marsden-book}, thanks to Noether's theorem~\cite{Noether}, Lie symmetries are intimately connected to conservation laws (first integrals in the case of ordinary differential equations).
 
A lot of models arising in concrete applications write as differential equations involving terms with very small coefficients;  the latter usually have the effect of destroying some important symmetries and so limiting the applicability of Lie group methods. Nevertheless, perturbative techniques are often successful in the investigation of  concrete models that can be viewed as small perturbations of exact models~\cite{Nayfeh}. In such a context, in order to extend the applicability of Lie symmetry methods to such kinds of problems, some approximate symmetry theories 
have been proposed and widely applied in the last decades.

There are two popular approaches to approximate Lie symmetries. One has been proposed by Baikov, Gazizov and Ibragimov (BGI)~\cite{BGI-1989,IbragimovKovalev}, consisting in   
expanding in a perturbation series the Lie generator and imposing the approximate invariance of the equations at hand. A slightly different method has been introduced by Fushchich and Shtelen (FS)~\cite{FS-1989}. In this case, the dependent variables are expanded  in a series as 
done in usual perturbation analysis; terms are then separated at each order of approximation, and a system of equations to be solved in a hierarchy is obtained; this resulting system 
is assumed to be coupled, and the approximate symmetries of the original equations are defined as the 
{exact symmetries} of the approximate equations.
Since their introduction, these two proposals have been applied to many physical models 
(see, for instance, 
\cite{Wiltshire1996,BaikovKordyukova2003,DolapciPakdemirli2004,Wiltshire2006,
GazizovIbragimovLukashchuk2010,GazizovIbragimov2014,Euler1,Euler2,Euler3,Diatta}). The BGI method has also been used for deriving an approximate Noether theorem~\cite{Noether,Govinder-1998,IbragimovUnalJogreus2004}.

Both methods have pros and cons. In the BGI approach, despite the elegant setting,
the expanded Lie generator is not consistent with the principles of perturbation analysis~\cite{Nayfeh} because the dependent variables are not expanded; consequently, in some cases, 
the approximately invariant solutions obtained by this method are not the most general ones (see~\cite{DSGO:lieapprox}). 
In contrast, the FS approach has a simple and coherent basis, but it requires a lot of computations (especially for higher order perturbations), and the crucial assumption of a fully coupled system is too strong, because the equations at a level should not be influenced by those at higher levels.
It is to be noticed that some variants~\cite{DolapciPakdemirli2004,Valenti} of the FS method have been proposed with the aim of reducing the amount of computations.

In a recent paper~\cite{DSGO:lieapprox}, an approximate symmetry theory, which is consistent with perturbative analysis and inherits the relevant properties of exact Lie symmetries of differential equations, has been proposed. More precisely, the dependent 
variables are expanded in power series of the small parameter as done in classical perturbative analysis; then, the Lie generator, assumed dependent on the small parameter, is accordingly expanded, and the approximate invariance  with respect to the approximate Lie generator is imposed, as in BGI method. Some applications of the new approach to approximate Lie symmetries of differential equations~\cite{DSGO:lieapprox} can be found in~\cite{Gorgone-2018,GorgoneOliveriEDJE-2018,GorgoneOliveriZAMP-2021}. Needless to say that
the method may require a lot of cumbersome computations; remarkably, there is a general and freely available package (ReLie,~\cite{Oliveri-relie}), written in the computer algebra system Reduce~\cite{Reduce}, able to do almost 
automatically all the needed work.

In this paper, we apply this consistent approach to variational problems, and state a consistent approximate Noether theorem leading to the construction of approximate conservation laws for models admitting a Lagrangian function containing small terms.

The plan of the paper is as follows. In Section~\ref{sec:theory}, to keep the paper self-contained and fix the notation, a brief sketch of the new approach to approximate Lie symmetries of differential equations introduced in~\cite{DSGO:lieapprox} is given. In Section~\ref{sec:Noether}, we describe the 
procedure for the approximate variational Lie symmetries of a Lagrangian function involving small perturbative terms, and state the approximate Noether theorem.
Section~\ref{sec:applications} contains some illustrative applications, and Section~\ref{sec:conclusions} our conclusions.

\section{The Consistent Approach to Approximate Lie Symmetries}
\label{sec:theory}

In this Section, in order to fix the notation, we briefly review the approach to approximate 
Lie symmetries of differential equations proposed in~\cite{DSGO:lieapprox}.
Let
\begin{equation}
\label{system}
\Delta\left(\mathbf{x},\mathbf{u},\mathbf{u}^{(r)};\varepsilon\right)=0
\end{equation}
be a differential equation of order $r$ where some terms include a parameter $\varepsilon\ll 1$;
function $\Delta$, assumed to be sufficiently smooth, depends on  the independent 
variables  $\mathbf{x}\in X\subseteq\mathbb{R}^n$, the dependent ones 
$\mathbf{u}\in U\subseteq\mathbb{R}^m$, and the derivatives (denoted by $\mathbf{u}^{(r)}$)  of the latter with respect to the former up to the order $r$. 
It is not uncommon to observe that Equation (\ref{system}) possesses few symmetries compared with the unperturbed equation  $\Delta\left(\mathbf{x},\mathbf{u},\mathbf{u}^{(r)};0\right)=0$.
Nevertheless, differential equations like (\ref{system}) are often investigated by means of 
a perturbative approach~\cite{Nayfeh} where one looks for
solutions in the form
\begin{equation}
\label{expansion_u}
\mathbf{u}(\mathbf{x};\varepsilon)=\sum_{k=0}^p\varepsilon^k \mathbf{u}_{(k)}(\mathbf{x})
+O(\varepsilon^{p+1}).
\end{equation}
Thus, Equation (\ref{system}) becomes 
\begin{equation}
\Delta\approx \sum_{k=0}^p\varepsilon^k\widetilde{\Delta}_{(k)}\left(\mathbf{x},\mathbf{u}_{(0)},
\mathbf{u}^{(r)}_{(0)},
\ldots,\mathbf{u}_{(k)},\mathbf{u}^{(r)}_{(k)}\right)=0,
\end{equation}
where, for any couple of  functions $f$ and $g$, the notation $f\approx g$ stands 
for $f-g=O(\varepsilon^{p+1})$.

The basic assumption of the approach to the approximate symmetries we want to use consists
in taking a Lie generator
\begin{equation}\label{gen-approx}
\Xi=\sum_{i=1}^n\xi_i(\mathbf{x},\mathbf{u};\varepsilon)\frac{\partial}{\partial x_i}
+\sum_{\alpha=1}^m\eta_\alpha(\mathbf{x},\mathbf{u};\varepsilon)\frac{\partial}{\partial u_\alpha}
\end{equation}
such that the infinitesimals  $\xi_i$ and $\eta_\alpha$ depend on the small parameter $\varepsilon$.

Inserting expansion~\eqref{expansion_u}, the infinitesimals can be written as
\begin{equation}\label{xi-approx1}
\xi_i\approx\sum_{k=0}^p\varepsilon^k \widetilde{\xi}_{(k)i}, \qquad \eta_\alpha\approx\sum_{k=0}
^p\varepsilon^k\widetilde{\eta}_{(k)\alpha},
\end{equation}
with
\begin{equation}\label{xi-approx2}
\begin{aligned}
&\widetilde{\xi}_{(0)i}=\xi_{(0)i}=\xi_i(\mathbf{x},\mathbf{u}_{(0)};0),\qquad
&&\widetilde{\eta}_{(0)\alpha}=\eta_{(0)\alpha}=\eta_\alpha(\mathbf{x},\mathbf{u}_{(0)};0),\\
&\widetilde{\xi}_{(k+1)i}=\frac{1}{k+1}\mathcal{R}[\widetilde{\xi}_{(k)i}],\qquad &&\widetilde{\eta}_{(k
+1)\alpha}=\frac{1}
{k+1}\mathcal{R}[\widetilde{\eta}_{(k)\alpha}],
\end{aligned}
\end{equation}
where $\mathcal{R}$ is a recursion operator, satisfying the Leibniz rule,  defined as 
\begin{equation}
\label{R_operator_new}
\begin{aligned}
&\mathcal{R}\left[\frac{\partial^{|\tau|}{f}_{(k)}(\mathbf{x},\mathbf{u}_{(0)})}{\partial u_{(0)1}
^{\tau_1}\dots\partial 
u_{(0)m}^{\tau_m}}\right]=\frac{\partial^{|\tau|}{f}_{(k+1)}(\mathbf{x},\mathbf{u}_{(0)})}{\partial 
u_{(0)1}
^{\tau_1}\dots\partial u_{(0)m}^{\tau_m}}\\
&\phantom{\mathcal{R}\left[\frac{\partial^{|\tau|}{f}_{(k)}(\mathbf{x},\mathbf{u}_{(0)})}{\partial 
u_{(0)1}
^{\tau_1}\dots\partial u_{(0)m}^{\tau_m}}\right]}
+\sum_{i=1}^m\frac{\partial}{\partial u_{(0)i}}\left(\frac{\partial^{|\tau|} {f}_{(k)}(\mathbf{x},
\mathbf{u}_{(0)})}{\partial 
u_{(0)1}^{\tau_1}\dots\partial u_{(0)m}^{\tau_m}}\right)u_{(1)i},\\
&\mathcal{R}[u_{(k)j}]=(k+1)u_{(k+1)j},
\end{aligned}
\end{equation}
for $k\ge 0$,  $j=1,\ldots,m$, $|\tau|=\tau_1+\cdots+\tau_m$.

Therefore, we get an approximate Lie generator
\begin{equation}
\Xi\approx \sum_{k=0}^p\varepsilon^k\widetilde{\Xi}_{(k)},
\end{equation}
where
\begin{equation}
\widetilde{\Xi}_{(k)}=\sum_{i=1}^n\widetilde{\xi}_{(k)i}(\mathbf{x},\mathbf{u}_{(0)},\ldots,\mathbf{u}
_{(k)})
\frac{\partial}{\partial x_i}
+\sum_{\alpha=1}^m\widetilde{\eta}_{(k)\alpha}(\mathbf{x},\mathbf{u}_{(0)},\ldots,\mathbf{u}_{(k)})
\frac{\partial}{\partial u_\alpha}.
\end{equation}

According to Lie's algorithm, the approximate Lie generator must be prolonged to the order $r$ 
to account for the transformation of derivatives; the prolongation is algorithmically done requiring that the contact conditions are preserved~\cite{Ovsiannikov1982book,Olver1986book,BlumanAnco2002book,BlumanCheviakovAnco2009book}:
\begin{equation}\label{prolongations}
\begin{aligned}
&\Xi^{(0)}=\Xi,\\
&\Xi^{(r)}=\Xi^{(r-1)}+\sum_{\alpha=1}^m\sum_{i_1=1}^n\ldots\sum_{i_r=1}^n\eta_{\alpha,i_1\ldots i_r}
\frac{\partial}
{\partial \frac{\partial^r u_\alpha}{\partial x_{i_1}\ldots\partial x_{i_r}}},\qquad r>0,
\end{aligned}
\end{equation}
where
\begin{equation}
\label{extended-generators}
\eta_{\alpha,i_1\ldots i_r}=\frac{D \eta_{\alpha,i_1\ldots i_{r-1}}}{D x_{i_r}}-\sum_{k=1}^n\frac{D 
\xi_k}{D x_{i_r}}
\frac{\partial^r u_\alpha}{\partial x_{i_1}\ldots \partial x_{i_{r-1}}\partial x_k},
\end{equation}
along with the {approximate} Lie derivative defined as
\begin{equation}
\label{Lie-derivative}
D_i=\frac{{D}}{{D} x_i}=\frac{\partial}{\partial x_i}+\sum_{k=0}^p\sum_{\alpha=1}^m \left(u_{(k)\alpha,i}
\frac{\partial}{\partial u_{(k)\alpha}}+\sum_{j=1}^n u_{(k)\alpha,ij} \frac{\partial}{\partial u_{(k)\alpha,j}}+\dots\right),
\end{equation}
where $\displaystyle u_{(k)\alpha,i}=\frac{\partial u_{(k)\alpha}}{\partial x_{i}}$, 
$\displaystyle u_{(k)\alpha,ij}=\frac{\partial^2 u_{(k)\alpha}}{\partial x_{i}\partial x_j}$, \ldots

Of course, in the expression of prolongations (\ref{extended-generators}), 
we need to take into account the expansions of $\xi_i$, $\eta_\alpha$, $u_\alpha$, and drop the $O(\varepsilon^{p+1})$ terms, so that
\begin{equation}
\Xi^{(r)} \approx \sum_{k=0}^p \varepsilon^k\widetilde{\Xi}^{(r)}_{(k)},
\end{equation}
with obvious meaning of symbols.

The approximate (at the order $p$) invariance condition of a differential equation reads
\begin{equation}
\left.\sum_{k=0}^p \varepsilon^k \sum_{\ell=0}^k \widetilde{\Xi}^{(r)}_{(\ell)}\widetilde{\Delta}_{(k-\ell)}\right|_{\Delta\approx 0}\approx 0.
\end{equation}
Some immediate consequences of this definition are in order~\cite{DSGO:lieapprox}. 
The Lie generator $\widetilde{\Xi}_{(0)}$ is always a symmetry of the unperturbed equations ($
\varepsilon=0$), and  $\displaystyle\sum_{k=1}^p\varepsilon^k\widetilde{\Xi}_{(k)}$ 
provides the deformation of the symmetry 
due to the small contributions; in contrast, not all the symmetries of the unperturbed equations are admitted as the zeroth terms of the approximate Lie symmetries. Moreover, 
if $\Xi$ is the generator of an approximate Lie symmetry, then
$\varepsilon\Xi$ is an admitted generator too. Finally, 
the approximate Lie symmetries of a differential equation are the elements of an
approximate Lie algebra.

\section{Approximate Noether Theorem}
\label{sec:Noether}
In this Section, we will be concerned with Noether theorem in the framework of approximate Lie symmetries outlined in Section~\ref{sec:theory}.
\begin{Definition}
Given a system of differential equations,
\begin{equation}
\label{sysgen}
\boldsymbol{\Delta}\left(\mathbf{x},\mathbf{u},\mathbf{u}^{(r)};\varepsilon\right)\approx \sum_{k=0}^p\varepsilon^k\widetilde{\boldsymbol\Delta}_{(k)}\left(\mathbf{x},\mathbf{u}_{(0)},
\mathbf{u}^{(r)}_{(0)},
\ldots,\mathbf{u}_{(k)},\mathbf{u}^{(r)}_{(k)}\right)=\mathbf{0},
\end{equation}
an approximate conservation law of order $r$ compatible with the system \eqref{sysgen} is a divergence expression
\begin{equation}
\label{conservationlaw} 
 \sum_{k=0}^p\varepsilon^k\left(\sum_{i=1}^n D_{i}\left(\widetilde{\Phi}^i_{(k)}\left(\mathbf{x},\mathbf{u}_{(0)},
\mathbf{u}^{(r-1)}_{(0)},
\ldots,\mathbf{u}_{(k)},\mathbf{u}^{(r-1)}_{(k)}\right)\right)\right) =O(\varepsilon^{p+1}),
\end{equation}
holding for all solutions of system \eqref{sysgen}, where 
\[
\sum_{k=0}^p \varepsilon^k  \widetilde{\Phi}^i_{(k)}\left(\mathbf{x},\mathbf{u}_{(0)},
\mathbf{u}^{(r-1)}_{(0)},
\ldots,\mathbf{u}_{(k)},\mathbf{u}^{(r-1)}_{(k)}\right), \qquad i=1,\ldots,n
\]
are the expansions at order $p$ of the fluxes of the conservation law according to (\ref{expansion_u}).
\end{Definition}
An approximate conservation law is {trivial} if it is $O(\varepsilon^{p+1})$ identically.
The notion of trivial conservation laws allows us for the introduction of equivalent conservation laws. Two approximate conservation laws are said {equivalent} if their linear combination is a trivial approximate conservation law. In general, a finite set $\mathcal{S}$ of approximate conservation laws is linearly dependent if there exists a set of constants, not all zero, such that the linear combination of the elements in $\mathcal{S}$ is trivial. In this case, at least one of the approximate conservation laws in $\mathcal{S}$ can be expressed as a linear combination of the remaining ones. 

It is well known that for variational problems, {i.e.}, differential equations derived from a Lagrangian function, the determination of conservation laws is ruled by Noether's theorem~\cite{Noether},  establishing a correspondence between symmetries of the action integral and conservation laws 
through an explicit formula involving the infinitesimal generator of the Lie  symmetry 
and the Lagrangian itself. 

In the sequel, let us limit ourselves to Lagrangian functions involving first order derivatives, albeit the procedure can be straightforwardly extended in the case of  higher order Lagrangians.
 
Let 
\begin{equation}\label{lag-function}
\mathcal{L}\left(\mathbf{x},\mathbf{u},\mathbf{u}^{(1)};\varepsilon\right)
\end{equation}
a Lagrangian function involving a small parameter $\varepsilon$. After inserting the expansion \eqref{expansion_u} in~\mbox{(\ref{lag-function})}, we have
\begin{equation}
\label{lag-expansion}
\begin{aligned}
\mathcal{L}\left(\mathbf{x},\mathbf{u},\mathbf{u}^{(1)};\varepsilon\right)&
\approx \mathcal{L}_0\left(\mathbf{x},\mathbf{u}_{(0)},\mathbf{u}_{(0)}^{(1)}\right)\\
&+
\sum_{k=1}^p \varepsilon^k \mathcal{L}_k\left(\mathbf{x},\mathbf{u}_{(0)},\ldots,\mathbf{u}_{(k)},\mathbf{u}_{(0)}^{(1)},\ldots,\mathbf{u}_{(k)}^{(1)}\right),
\end{aligned}
\end{equation}
where $\mathcal{L}_0=\mathcal{L}\left(\mathbf{x},\mathbf{u}_{(0)},\mathbf{u}_{(0)}^{(1)};0\right)$ is the unperturbed Lagrangian and the remaining terms in~\mbox{(\ref{lag-expansion})} represent the perturbation up to the order $p$ in $\varepsilon$.

Let us consider the Lagrangian action
\begin{equation}
\label{action}
\begin{aligned}
\mathcal{J}\left(\mathbf{x},\mathbf{u},\mathbf{u}^{(1)};\varepsilon\right)&= \int_\Omega
\mathcal{L}\left(\mathbf{x},\mathbf{u},\mathbf{u}^{(1)};\varepsilon\right)d\mathbf{x} \approx\\
&\approx\int_\Omega \left(
\sum_{k=0}^p\varepsilon^k\mathcal{L}_k\left(\mathbf{x},\mathbf{u}_{(0)},\ldots,\mathbf{u}_{(k)},\mathbf{u}^{(1)}_{(0)},\ldots,\mathbf{u}^{(1)}_{(k)}\right)\right)d\mathbf{x},
\end{aligned}
\end{equation}
over a domain $\Omega$. By requiring the 
first variation of the Lagrangian action (\ref{action}) to be $O(\varepsilon^{p+1})$ under variations of order $O(\varepsilon^{p+1})$ at the boundary of $\Omega$, we obtain the approximate Lagrange equations
\begin{equation}
\sum_{k=0}^p \varepsilon^k \left(\frac{\partial \mathcal{L}_k}{\partial u_{(0)\alpha}}
-\sum_{i=1}^nD_{i}\left(\frac{\partial \mathcal{L}_k
}{\partial u_{(0)\alpha,i}}\right)\right)\approx 0, \qquad \alpha=1,\ldots,m.
\end{equation}

Now, we have all the ingredients to state the approximate Noether theorem.

\begin{Theorem}[Approximate Noether theorem]
\label{Noether-theorem}
Let us consider a variational system of differential equations arising from a first order Lagrangian function.
Let expression (\ref{gen-approx}) be the generator of an approximate Lie symmetry 
of the approximate Lagrangian action (\ref{action}), say
\begin{equation}
\label{inv-action}
\sum_{k=0}^{p}\varepsilon^k\left(\sum_{j=0}^k\left(\widetilde{\Xi}^{(1)}_{(j)}\mathcal{L}_{k-j}+\mathcal{L}_{k-j}\sum_{i=1}^n D_{i} \widetilde{\xi}_{(j)i}\right)-\sum_{i=1}^n D_{i}\phi^i_{(k)}\right) \approx 0,
\end{equation}
where  $\phi^i_{(k)}\left(\mathbf{x},\mathbf{u}_{(0)},\ldots,\mathbf{u}_{(k)}\right)$  ($i=1,\ldots,n$)
are functions of their arguments to be suitably determined.
Then, the approximate conservation law 
\begin{equation}
\sum_{k=0}^p\varepsilon^k \sum_{i=1}^n D_i \widetilde{\Phi}^i_{(k)}\approx 0,
\end{equation}
where
\begin{equation}
\widetilde{\Phi}^i_{(k)}=\sum_{\ell=0}^k\left(\sum_{\alpha=1}^m\left(\left(\widetilde{\eta}_{(\ell)\alpha}-\sum_{j=1}^n\widetilde{\xi}_{(\ell)j} u_{(\ell)\alpha,j}\right)\sum_{q=0}^{k-\ell}\frac{\partial\mathcal{L}_{k-\ell}}{\partial u_{(q)\alpha,i}}\right)+\widetilde{\xi}_{(\ell)i}\mathcal{L}_{k-\ell}\right)-\phi^i_{(k)},
\end{equation}
is recovered.

In the case of only one independent variable, say the time $t$, the approximate Noether theorem
provides a conserved quantity
\[
\mathcal{I}=\sum_{k=0}^p \varepsilon^k \widetilde{\Phi}_{(k)}
\]
such that
\[
\frac{D\mathcal{I}}{Dt}\approx 0.
\]
\end{Theorem}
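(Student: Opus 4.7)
The proof plan mirrors the classical Noether theorem argument, applied carefully to the $\varepsilon$-expansions. The idea is to start from the invariance condition (\ref{inv-action}), rewrite the action of each prolonged $\widetilde{\Xi}^{(1)}_{(j)}$ on $\mathcal{L}_{k-j}$ using the chain rule, and isolate divergence terms via integration by parts in order to expose the approximate Euler--Lagrange operator. On solutions of the approximate Lagrange equations, the leftover divergence will be exactly $\sum_i D_i \widetilde{\Phi}^i_{(k)}$ at each order of $\varepsilon$.

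First I would introduce the approximate characteristic $Q_{(\ell)\alpha} = \widetilde{\eta}_{(\ell)\alpha} - \sum_{j} \widetilde{\xi}_{(\ell)j}\, u_{(\ell)\alpha,j}$ at each order $\ell$, and rewrite each prolonged generator $\widetilde{\Xi}^{(1)}_{(\ell)}$ in evolutionary form plus a total derivative, i.e.\ $\widetilde{\Xi}^{(1)}_{(\ell)} = \widetilde{\Xi}^{(1)}_{Q,(\ell)} + \sum_i \widetilde{\xi}_{(\ell)i} D_i$. This decomposition is legitimate at each order because the prolongation rules (\ref{prolongations})--(\ref{extended-generators}) are formally identical to the classical ones once the approximate total derivative $D_i$ of (\ref{Lie-derivative}) is used, and the latter treats the fields $\mathbf{u}_{(0)},\ldots,\mathbf{u}_{(p)}$ as independent. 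Inserting this into the bracket of (\ref{inv-action}) turns the terms $\widetilde{\Xi}^{(1)}_{(j)}\mathcal{L}_{k-j} + \mathcal{L}_{k-j}\sum_i D_i\widetilde{\xi}_{(j)i}$ into $\widetilde{\Xi}^{(1)}_{Q,(j)}\mathcal{L}_{k-j} + \sum_i D_i\bigl(\widetilde{\xi}_{(j)i}\mathcal{L}_{k-j}\bigr)$, which is the usual starting point of the Noether manipulation.

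Next, for each fixed $k$, I would apply integration by parts to the evolutionary piece $\widetilde{\Xi}^{(1)}_{Q,(j)}\mathcal{L}_{k-j}$, exploiting that $\mathcal{L}_{k-j}$ depends on all $\mathbf{u}_{(q)}$ with $q\le k-j$ and on their first derivatives. The product $Q_{(j)\alpha}\,\partial\mathcal{L}_{k-j}/\partial u_{(q)\alpha,i}$ is rewritten as $D_i\bigl(Q_{(j)\alpha}\,\partial\mathcal{L}_{k-j}/\partial u_{(q)\alpha,i}\bigr) - Q_{(j)\alpha}\,D_i(\partial\mathcal{L}_{k-j}/\partial u_{(q)\alpha,i})$, producing a characteristic $Q_{(j)\alpha}$ multiplied by the approximate Euler--Lagrange expression plus a divergence. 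Summing over $q$ (with the indices relabelled as in the statement by setting $\ell=j$) reconstructs the inner sum $\sum_{q=0}^{k-\ell}\partial\mathcal{L}_{k-\ell}/\partial u_{(q)\alpha,i}$ appearing in $\widetilde{\Phi}^i_{(k)}$. Because the approximate Lagrange equations hold order by order, the characteristic-times-Euler--Lagrange contributions vanish on solutions, and combining the surviving divergences with $\sum_i D_i(\widetilde{\xi}_{(\ell)i}\mathcal{L}_{k-\ell})$ and the boundary flux $-\phi^i_{(k)}$ from (\ref{inv-action}) yields precisely $\sum_i D_i\widetilde{\Phi}^i_{(k)}\approx 0$.

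The main obstacle will be the combinatorial bookkeeping of the convolution $\sum_{j=0}^{k}$ in (\ref{inv-action}) together with the fact that each $\mathcal{L}_{k-j}$ is internally coupled across all orders $q\le k-j$; tracking which characteristic $Q_{(\ell)\alpha}$ pairs with which $\partial\mathcal{L}_{k-\ell}/\partial u_{(q)\alpha,i}$, so as to reproduce exactly the stated flux formula and to guarantee that the Euler--Lagrange remainder indeed coincides with the full perturbed Lagrange equations, is delicate but routine once the indexing is carefully fixed. The one-independent-variable corollary is then immediate: specializing $n=1$ and $x_1=t$ turns the divergence into $D\mathcal{I}/Dt\approx 0$ with $\mathcal{I}=\sum_{k=0}^p\varepsilon^k\widetilde{\Phi}_{(k)}$.
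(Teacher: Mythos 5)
Your proposal is correct and follows exactly the route the paper intends: the paper's own proof is just the one-line remark that the result is ``an adaptation of the proof of classical Noether theorem in the approximate context,'' and your evolutionary-form decomposition, integration by parts against the approximate Euler--Lagrange operator, and order-by-order collection of divergences is precisely that adaptation spelled out. The combinatorial bookkeeping you flag (matching each characteristic $Q_{(\ell)\alpha}$ with the sums $\sum_{q=0}^{k-\ell}\partial\mathcal{L}_{k-\ell}/\partial u_{(q)\alpha,i}$) is indeed the only delicate point, and your plan handles it in the standard way.
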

\begin{proof}
The proof is immediate and is simply an adaptation of the proof of classical Noether theorem
in the approximate context.
 \end{proof}

\begin{Remark}
A Noether theorem, as well as an approximate Noether theorem, can be easily stated in the case of
higher order Lagrangian functions (see, for instance,~\cite{BlumanCheviakovAnco2009book} for the exact case).
\end{Remark}

\begin{Remark}
The generators of approximate variational Lie symmetries leave the Lagrangian action approximately invariant; consequently, they are admitted as approximate Lie symmetries of the approximate Lagrange equations. As in the exact case, the converse is not true, since not all the approximate Lie symmetries of the Euler--Lagrange equations are approximate variational Lie symmetries.
\end{Remark}

\section{Applications}
\label{sec:applications}
In this Section, we consider some examples of physical interest where the procedure for the approximate variational Lie symmetries can be applied. The first two examples have been considered in literature and analyzed by means of the BGI approach to approximate Lie symmetries, the third one faces the planar three body problem where one of the material points has a mass of order $\varepsilon$. 
Moreover,  we limit ourselves to first order approximate variational Lie symmetries of first order perturbed Lagrangian functions, and determine the corresponding  first order approximate conservation laws. The computation of approximate variational Lie symmetries and approximate conservation laws has been done by means of the program ReLie~\cite{Oliveri-relie}.

\subsection{Perturbed Harmonic Oscillator}
As first example, we consider the one-dimensional perturbed harmonic oscillator, say the differential equation
\begin{equation}\label{oscillar}
\ddot{u}+ u+\varepsilon F(u)=0,
\end{equation}
derived from the Lagrangian
\begin{equation}
\label{lagrangian-harmonic}
\mathcal{L}(u,\dot u;\varepsilon)=\frac{1}{2}\left(\dot u^2-u^2\right)-\varepsilon \int F(u)du, 
\end{equation}
where the dot stands for differentiation with respect to time, and $F(u)$ is an arbitrary function of $u$ such that $\displaystyle \frac{d^2F}{du^2}\neq 0$. 
This model has been studied in~\cite{Baikov-1994} for the classification of the admitted approximate Lie symmetries, and in~\cite{Govinder-1998}, where an analysis of the approximate Noether symmetries and the corresponding approximate conservation laws has been developed.
In order to perform the algorithmic procedure for approximate variational 
Lie symmetries described in the previous Section, let us
expand $u(t;\varepsilon)$ at ﬁrst order in $\varepsilon$, {i.e.},
\begin{equation}
u(t;\varepsilon)=u_{(0)}(t)+\varepsilon u_{(1)}(t)+ O(\varepsilon^2),
\end{equation}
whereupon we have the first order perturbed Lagrangian 
\begin{equation}
\begin{aligned}
\mathcal{L}&\equiv\mathcal{L}_0\left(u_{(0)},\dot u_{(0)}\right)+\varepsilon \mathcal{L}_1\left(u_{(0)},u_{(1)},\dot u_{(0)},\dot u_{(1)}\right)=\\
&=\frac{1}{2}\left(\dot u_{(0)}^{2}-u_{(0)}^2\right)+\varepsilon\left(\dot u_{(0)} \dot u_{(1)}-u_{(0)}u_{(1)}-\int F_0(u_{(0)}) du_{(0)}\right),
\end{aligned}
\end{equation}
with $\displaystyle F(u)\approx F_0(u_{(0)})+\varepsilon \frac{d F_0(u_{(0)})}{d u_{(0)}}u_{(1)}$.

By solving the approximate determining equations, a classification is needed  according to the functional form of $F(u)$:
\begin{itemize}
\item[(1)] $F(u)$ arbitrary;
\item[(2)] $F(u)=(u+\delta)^2$, $\delta$ constant;
\item[(3)] $\displaystyle F(u)=\frac{\kappa}{u^3}$, $\kappa$ constant.
\end{itemize}

For $F(u)$ arbitrary, we determine the following generators of approximate variational Lie symmetries together with the expansion of the associated function $\phi$ entering the invariance condition (\ref{inv-action}):
\begin{equation}
\label{sym_arb}
\begin{aligned}
&\Xi_1=\frac{\partial}{\partial t},\qquad &&\phi_{(0)}=\phi_{(1)}=0;\\
&\Xi_2=\varepsilon \sin(t)\frac{\partial}{\partial u},\qquad  &&\phi_{(0)}=0,\quad\phi_{(1)}=-\cos(t)u_{(0)};\\
&\Xi_3=\varepsilon \cos(t)\frac{\partial}{\partial u},\qquad &&\phi_{(0)}=0,\quad\phi_{(1)}=\sin(t)u_{(0)};\\
&\Xi_4=\varepsilon\left(\sin(2t)\frac{\partial}{\partial t}+\cos(2t)u_{(0)}\frac{\partial}{\partial u}\right),
\qquad&&\phi_{(0)}=0,\quad\phi_{(1)}=\sin(2t)u_{(0)}^2;\\
&\Xi_5=\varepsilon\left(\cos(2t)\frac{\partial}{\partial t}-\sin(2t)u_{(0)}\frac{\partial}{\partial u}\right),
\qquad &&\phi_{(0)}=0,\quad \phi_{(1)}=\cos(2t)u_{(0)}^2;\\
&\Xi_6=\varepsilon\Xi_1,\qquad &&\phi_{(0)}=\phi_{(1)}=0;
\end{aligned}
\end{equation} 
the application of Theorem \ref{Noether-theorem} yields the following approximate conserved quantities:
\[
\begin{aligned}
&\mathcal{I}_1=\frac{1}{2}\left(\dot u_{(0)}^{2}+u_{(0)}^2\right)+\varepsilon\left(\dot u_{(0)} \dot u_{(1)}+u_{(0)}u_{(1)}+\int F_0(u_{(0)}) du_{(0)}\right),\\
&\mathcal{I}_2=\varepsilon \left(\sin(t)\dot u_{(0)}-\cos(t)u_{(0)}\right),\\
&\mathcal{I}_3=\varepsilon \left(\cos(t)\dot u_{(0)}+\sin(t)u_{(0)}\right),\\
&\mathcal{I}_4=\varepsilon \left(\left(\sin(t)\dot u_{(0)}-\cos(t)u_{(0)}\right)\left(\cos(t)\dot u_{(0)}+\sin(t)u_{(0)}\right)\right),\\
&\mathcal{I}_5=\varepsilon \left(\sin(t)\dot u_{(0)}-\cos(t)u_{(0)}\right)^2,\\
&\mathcal{I}_6=\varepsilon \mathcal{I}_1.
\end{aligned}
\]

For $F(u)=(u+\delta)^2$, in addition to \eqref{sym_arb}, we have two further generators of approximate variational Lie symmetries, namely 
\[
\begin{aligned}
&\Xi_{7a}=4\varepsilon\sin(t)\frac{\partial}{\partial t}+\left(-3\cos(t)+\varepsilon\left(3\delta t\sin(t)+2\cos(t)u_{(0)}\right)\right)\frac{\partial}{\partial u},\\
&\qquad\phi_{(0)}=-3\sin(t)u_{(0)},\\
&\qquad\phi_{(1)}=\sin(t)u_{(0)}^2-3\delta(t\cos(t)+\sin(t))u_{(0)}-3\sin(t)u_{(1)}-3\delta^2\sin(t);\\
&\Xi_{8a}=4\varepsilon\cos(t)\frac{\partial}{\partial t}+\left(3\sin(t)+\varepsilon\left(3\delta t\cos(t)-2\sin(t)u_{(0)}\right)\right)\frac{\partial}{\partial u},\\
&\qquad\phi_{(0)}=-3\cos(t)u_{(0)},\\
&\qquad\phi_{(1)}=\cos(t)u_{(0)}^2+3\delta(t\sin(t)-\cos(t))u_{(0)}-3\cos(t)u_{(1)}-3\delta^2\cos(t);
\end{aligned}
\]
the associated approximate conserved quantities turn out to be
\[
\begin{aligned}
\mathcal{I}_{7a}&=\cos(t)\dot u_{(0)}+\sin(t)u_{(0)}+\varepsilon\left(\frac{2}{3}\sin(t)\dot u_{(0)}^{2}-\left(\frac{2}{3}\cos(t)u_{(0)}+\delta t\sin(t)\right)\dot u_{(0)}\right.\\
&+\left.\frac{\sin(t)}{3}u_{(0)}^2+\delta(t\cos(t)+\sin(t))u_{(0)}+\cos(t)\dot u_{(1)}+\sin(t)u_{(1)}+\delta^2\sin(t)\right),\\
\mathcal{I}_{8a}&=\sin(t)\dot u_{(0)}-\cos(t)u_{(0)}+\varepsilon\left(-\frac{2}{3}\cos(t)\dot u_{(0)}^{2}-\left(\frac{2}{3}\sin(t)u_{(0)}-\delta t\cos(t)\right)\dot u_{(0)}\right.\\
&-\left.\frac{\cos(t)}{3}u_{(0)}^2+\delta(t\sin(t)-\cos(t))u_{(0)}+\sin(t)\dot u_{(1)}-\cos(t)u_{(1)}-\delta^2\cos(t)\right).
\end{aligned}
\]

Finally, for $\displaystyle F(u)=\frac{\kappa}{u^3}$, we have, besides generators (\ref{sym_arb}), 
the following generators of approximate variational Lie symmetries, say
\[
\begin{aligned}
&\Xi_{7b}=\cos(2t)\frac{\partial}{\partial t}-\sin(2t)\left(u_{(0)}+\varepsilon u_{(1)}\right)\frac{\partial}{\partial u}\\
&\qquad\phi_{(0)}=\cos(2t)u_{(0)}^2,\quad\phi_{(1)}=2\cos(2t)u_{(0)}u_{(1)};\\
&\Xi_{8b}=\sin(2t)\frac{\partial}{\partial t}+\cos(2t)\left(u_{(0)}+\varepsilon u_{(1)}\right)\frac{\partial}{\partial u},\\
&\qquad\phi_{(0)}=\sin(2t)u_{(0)}^2,\quad\phi_{(1)}=2\sin(2t)u_{(0)}u_{(1)};
\end{aligned}
\]
thus, the following approximate conserved quantities are obtained:
\[
\begin{aligned}
\mathcal{I}_{7b}&=\cos(2t)\frac{\dot u_{(0)}^{2}-u_{(0)}^{2}}{2}+\sin(2t)\dot u_{(0)} u_{(0)}+\varepsilon\left(\left(\cos(2t)\dot u_{(1)}+\sin(2t)u_{(1)}\right)\dot u_{(0)}\phantom{\frac{\cos(2t)}{2 u_{(0)}^{2}}}\right.\\
&+\left.\left(\sin(2t)\dot u_{(1)}-\cos(2t)u_{(1)}\right)u_{(0)}-\kappa\frac{\cos(2t)}{2 u_{(0)}^{2}}\right),\\
\mathcal{I}_{8b}&=\sin(2t)\frac{\dot u_{(0)}^{2}-u_{(0)}^2}{2}-\cos(2t)\dot u_{(0)} u_{(0)}+\varepsilon\left(\left(\sin(2t)\dot u_{(1)}-\cos(2t)u_{(1)}\right)\dot u_{(0)}\phantom{\frac{\cos(2t)}{2 u_{(0)}^{2}}}\right.\\
&-\left.\left(\cos(2t)\dot u_{(1)}+\sin(2t)u_{(1)}\right)u_{(0)}-\kappa\frac{\sin(2t)}{2 u_{(0)}^{2}}\right).
\end{aligned}
\]
\begin{Remark}
The approximate Noether symmetries above recovered are just the expansion of those determined in~\cite{Govinder-1998}; this explains essentially the concept that our approach, compared with the BGI method,  is consistent with the principles of perturbation analysis; moreover, the corresponding first order approximate conserved quantities obtained in this paper are somehow different from those reported in~\cite{Govinder-1998} (maybe some formulas of the latter paper are not completely correct: We underline that, albeit in~\cite{Govinder-1998} first order approximate conserved quantities are considered, the results there provided include also higher order terms in $\varepsilon$).
\end{Remark}

\subsection{A Second Order System of ODEs}
Let us consider the following perturbed second order system of ordinary differential equations investigated in~\cite{Campoamor-Stursberg-2016},
\begin{equation}
\label{system-ode}
\begin{aligned}
&\ddot u+\frac{\alpha}{u^3}-\varepsilon\frac{1}{u^3}\frac{dF(v)}{dv}=0,\\
&\ddot v-4\alpha\frac{v}{u^4}+2\frac{1}{u}\dot u \dot v+2\varepsilon\frac{1}{u^4}\frac{d}{dv}\left(vF(v)\right)=0;
\end{aligned}
\end{equation}
system (\ref{system-ode}) can be derived from the Lagrangian
\begin{equation}
\mathcal{L}(u,v,\dot u,\dot v;\varepsilon)=v \dot u^{2}+ u \dot u \dot v -\alpha\frac{v}{u^2}+\varepsilon\frac{F(v)}{u^2},
\end{equation}
where $u\equiv u(t;\varepsilon)$, $v\equiv v(t;\varepsilon)$, $F(v)$ arbitrary smooth function of its argument, and\linebreak $\alpha$~constant.

In~\cite{Campoamor-Stursberg-2016},  system (\ref{system-ode}) was analyzed as an approximate variational problem, and some approximate conserved quantities for arbitrary $F(v)$ have been determined. As our consistent procedure requires, expanding  $u(t;\varepsilon)$ and $v(t;\varepsilon)$ at first order in $\varepsilon$, we get the first order perturbed Lagrangian

\begin{equation}
\begin{aligned}
\mathcal{L}&\approx \mathcal{L}_0\left(u_{(0)},v_{(0)},\dot u_{(0)},\dot v_{(0)}\right)+
\varepsilon\mathcal{L}_1\left(u_{(0)},v_{(0)},u_{(1)},v_{(1)},
\dot u_{(0)},\dot v_{(0)},\dot u_{(1)},\dot v_{(1)}\right) =\\
&=v_{(0)}\dot u_{(0)}^2 +u_{(0)}\dot u_{(0)}\dot v_{(0)}-\alpha \frac{v_{(0)}}{u_{(0)}^2}\\
&+\varepsilon\left(2v_{(0)}\dot u_{(0)}\dot u_{(1)}+v_{(1)}\dot u_{(0)}^2+u_{(0)}\dot u_{(0)}\dot v_{(1)}+u_{(0)}\dot u_{(1)}\dot v_{(0)}+u_{(1)}\dot u_{(0)}\dot v_{(0)}\phantom{\frac{u_{(0)}}{v_{(0)}}}\right.\\
&\left.-\alpha\left(\frac{v_{(1)}}{u_{(0)}^2}-
2\frac{v_{(0)}}{u_{(0)}^3}u_{(1)}\right)+\frac{F(v_{(0)})}{u_{(0)}^2}\right);
\end{aligned}
\end{equation}
by imposing the approximate invariance of the action integral, we are able to obtain the following approximate generators together with the expansion of the associated function $\phi$ entering  condition (\ref{inv-action}):
\begin{equation}\label{sym-ode}
\begin{aligned}
&\Xi_1=\frac{\partial}{\partial t},\qquad && \phi_{(0)}=\phi_{(1)}=0;\\
&\Xi_2=t^2\frac{\partial}{\partial t}+t\left(u_{(0)}+\varepsilon u_{(1)}\right)\frac{\partial}{\partial u},\qquad&& \phi_{(0)}=-u_{(0)}^2 v_{(0)},\\
& \qquad &&\phi_{(1)}=-u_{(0)}\left(u_{(0)}v_{(1)}+2v_{(0)}u_{(1)}\right);\\
&\Xi_3=2t\frac{\partial}{\partial t}+\left(u_{(0)}+\varepsilon u_{(1)}\right)\frac{\partial}{\partial u},&& \phi_{(0)}=\phi_{(1)}=0;\\
&\Xi_4=\varepsilon\Xi_1,\qquad && \phi_{(0)}=\phi_{(1)}=0;\\
&\Xi_5=\varepsilon\Xi_2,\qquad &&\phi_{(0)}=0,\quad \phi_{(1)}=-u_{(0)}^2 v_{(0)};\\
&\Xi_6=\varepsilon\Xi_3,\qquad && \phi_{(0)}=\phi_{(1)}=0;
\end{aligned}
\end{equation}
these generators provide the following approximate conserved quantities:
\begin{align*}
\mathcal{I}_{1}&=\dot u_{(0)}^{2}v_{(0)}+\dot u_{(0)}\dot v_{(0)}u_{(0)}+\alpha \frac{v_{(0)}}{u_{(0)}^2}+
\varepsilon\left(\dot u_{(0)}^{2}v_{(1)}+\dot u_{(0)}\dot v_{(0)}u_{(1)}+2 \dot u_{(0)}\dot u_{(1)} v_{(0)}\phantom{\frac{F_0}{u_{(0)}^2}}\right.\\
&\left.+\dot u_{(0)}\dot v_{(1)} u_{(0)}+\dot v_{(0)}\dot u_{(1)} u_{(0)}+\alpha\frac{u_{(0)}v_{(1)}-2v_{(0)}u_{(1)}}{u_{(0)}^3}-\frac{F_0}{u_{(0)}^2}\right),\allowdisplaybreaks\\
\mathcal{I}_{2}&=t^2\left(\dot u_{(0)}^{2}v_{(0)}+\dot u_{(0)}\dot v_{(0)} u_{(0)}+\alpha \frac{v_{(0)}}{u_{(0)}^2}\right)-t\left(2 \dot u_{(0)} u_{(0)} v_{(0)}+ \dot v_{(0)} u_{(0)}^2\right)+u_{(0)}^2 v_{(0)}\\
&+\varepsilon\left(t^2\left(\dot u_{(0)}^{2}v_{(1)}+\dot u_{(0)}\dot v_{(0)} u_{(1)}+2 \dot u_{(0)}\dot u_{(1)} v_{(0)}+\dot u_{(0)} \dot v_{(1)} u_{(0)}+\dot v_{(0)} \dot u_{(1)}u_{(0)}\phantom{\frac{F_0}{u_{(0)}^2}}\right.\right.\\
&+\left.\left.\alpha\frac{u_{(0)}v_{(1)}-2v_{(0)}u_{(1)}}{u_{(0)}^3}-\frac{F_0}{u_{(0)}^2}\right)-t\left(2 \dot u_{(0)}\left(u_{(0)}v_{(1)}+u_{(1)}v_{(0)}\right)+2 \dot v_{(0)} u_{(0)}u_{(1)}\right.\right.\\
&\left.\left.+2 \dot u_{(1)} u_{(0)}v_{(0)}+\dot v_{(1)} u_{(0)}^2\right)+u_{(0)}^2v_{(1)}+2u_{(0)}v_{(0)}u_{(1)}\right),\allowdisplaybreaks\\
\mathcal{I}_{3}&=2t\left(\dot u_{(0)}^{2}v_{(0)}+\dot u_{(0)}\dot v_{(0)}u_{(0)}+\alpha \frac{v_{(0)}}{u_{(0)}^2}\right)-2\dot u_{(0)} u_{(0)} v_{(0)}-\dot v_{(0)} u_{(0)}^2\\
&+\varepsilon\left(2t\left(\dot u_{(0)}^{2}v_{(1)}+ \dot u_{(0)}\dot v_{(0)} u_{(1)}+2 \dot u_{(0)}\dot u_{(1)} v_{(0)}+\dot u_{(0)}\dot v_{(1)}u_{(0)}+\dot v_{(0)}\dot u_{(1)}u_{(0)}\phantom{\frac{F_0}{u_{(0)}^2}}\right.\right.\\
&\left.\left.+\alpha\frac{u_{(0)}v_{(1)}-2v_{(0)}u_{(1)}}{u_{(0)}^3}-\frac{F_0}{u_{(0)}^2}\right)-\left(2 \dot u_{(0)}\left(u_{(0)}v_{(1)}+u_{(1)}v_{(0)}\right)+2\dot v_{(0)} u_{(0)}u_{(1)}\right.\right.\\
&\left.\left.+\dot 2u_{(1)} u_{(0)}v_{(0)}+\dot v_{(1)} u_{(0)}^2\right)\right),\allowdisplaybreaks\\
\mathcal{I}_{4}&=\varepsilon\mathcal{I}_{1},\qquad\mathcal{I}_{5}=\varepsilon\mathcal{I}_{2},\qquad\mathcal{I}_{6}=\varepsilon\mathcal{I}_{3},
\end{align*}
where $\displaystyle F(v)\approx F_0(v_{(0)})+\varepsilon \frac{d F_0(v_{(0)})}{d v_{(0)}}v_{(1)}$.
\begin{Remark}
In this application, we recovered a set of approximate conserved quantities different from that reported in~\cite{Campoamor-Stursberg-2016}; furthermore, there is no need to consider special instances of the arbitrary function $F(v)$.
\end{Remark}

\subsection{The Three-Body Problem}
Let us now focus on the problem of three bodies with masses $m_{\alpha}$ ($\alpha=1,2,3$) moving under their mutual gravitational attraction. We assume that the mass $m_3$ of the third body is much smaller than $m_1$ and $m_2$, and that the three point masses move  
in a fixed plane. Therefore, we consider a planar restricted problem without neglecting the gravitational action of the third body on the two main bodies. Let $\mathbf{r}_i\equiv(x_i,y_i,0)$ $(i=1,2,3)$ be the position vectors of the three point masses in a fixed frame reference, and $\mathbf{r}_{ij}=\mathbf{r}_i-\mathbf{r_j}$ $(1\le i<j\le 3)$. Under these hypotheses, the motion equations are:
\begin{equation}
\label{3body-motion}
\begin{aligned}
&\ddot{r}_1+Gm_2 \frac{\mathbf{r}_{12}}{|\mathbf{r}_{12}|^3}+\varepsilon Gm_3 \frac{\mathbf{r}_{13}}{|\mathbf{r}_{13}|^3}=\mathbf{0},\\
&\ddot{r}_2-Gm_1 \frac{\mathbf{r}_{12}}{|\mathbf{r}_{12}|^3}+\varepsilon Gm_3 \frac{\mathbf{r}_{23}}{|\mathbf{r}_{23}|^3}=\mathbf{0},\\
&\ddot{r}_3-Gm_1 \frac{\mathbf{r}_{13}}{|\mathbf{r}_{13}|^3}- Gm_2 \frac{\mathbf{r}_{23}}{|\mathbf{r}_{23}|^3}=\mathbf{0},
\end{aligned}
\end{equation}
where $G$ is the gravitational constant. Equation~(\ref{3body-motion}) derive from the Lagrangian function 
\begin{equation}
\mathcal{L}=\frac{1}{2}\left(m_1\dot{\mathbf{r}}_1^2+m_2\dot{\mathbf{r}}_2^2\right)
+\frac{G m_1 m_2}{|\mathbf{r}_{12}|}
+\varepsilon\left(m_3\dot{\mathbf{r}}_3^2+\frac{G m_1 m_3}{|\mathbf{r}_{13}|}+\frac{G m_2 m_3}{|\mathbf{r}_{23}|}\right).
\end{equation}

Let us expand the dependent variables at ﬁrst order in $\varepsilon$, {i.e.},
\begin{equation}
\begin{aligned}
\mathbf{r}_i&=\mathbf{r}_{(0)i}+\varepsilon\mathbf{r}_{(1)i}+O(\varepsilon^2)\equiv\\
&\equiv\left(x_{(0)i}(t)+\varepsilon x_{(1)i}(t)+ O(\varepsilon^2),y_{(0)i}(t)+\varepsilon y_{(1)i}(t)+ O(\varepsilon^2),0\right), \qquad i=1,2,3,
\end{aligned}
\end{equation}
whereupon we have
\begin{equation}
\mathbf{r}_{ij}=\mathbf{r}_{(0)ij}+\varepsilon\mathbf{r}_{(1)ij}+O(\varepsilon^2)=
\mathbf{r}_{(0)i}-\mathbf{r}_{(0)j}+\varepsilon\left(\mathbf{r}_{(1)i}-\mathbf{r}_{(1)j}\right)+O(\varepsilon^2),
\end{equation}
and
\begin{equation}
\begin{aligned}
\mathcal{L}&\approx \mathcal{L}_0+\varepsilon\mathcal{L}_1=\\
&=\frac{1}{2}\left(m_1\dot{\mathbf{r}}_{(0)1}^2+m_2\dot{\mathbf{r}}_{(0)2}^2\right)
+\frac{G m_1 m_2}{|\mathbf{r}_{(0)12}|}\\
&+\varepsilon\left(m_1\dot{\mathbf{r}}_{(0)1}\dot{\mathbf{r}}_{(1)1}+m_2\dot{\mathbf{r}}_{(0)2}\cdot\dot{\mathbf{r}}_{(1)2} +m_3\dot{\mathbf{r}}_{(0)3}^2\phantom{\frac{G m_2 m_3}{|\mathbf{r}_{(0)23}|}}\right.\\
&\quad\left.-\frac{Gm_1m_2}{|\mathbf{r}_{(0)12}|^3}\mathbf{r}_{(0)12}\cdot\mathbf{r}_{(1)12} +\frac{G m_1 m_3}{|\mathbf{r}_{(0)13}|}+\frac{G m_2 m_3}{|\mathbf{r}_{(0)23}|}\right);
\end{aligned}
\end{equation}
by searching for the approximate variational Lie symmetries, we are able to determine the following approximate generators together with the expansion of the associated function $\phi$ entering the invariance condition (\ref{inv-action}):
\begin{align*}
&\Xi_1=\frac{\partial}{\partial t},\qquad \phi_{(0)}=\phi_{(1)}=0;\allowdisplaybreaks\\
&\Xi_{2a}=\sum_{i=1}^3\frac{\partial}{\partial x_i},\qquad\Xi_{2b}=\sum_{i=1}^3\frac{\partial}{\partial y_i},\qquad  \phi_{(0)}=\phi_{(1)}=0;\allowdisplaybreaks\\
&\Xi_{3a}=t\sum_{i=1}^3\frac{\partial}{\partial x_i},\qquad \phi_{(0)}=-\sum_{i=1}^2 m_i x_{(0)i},\quad\phi_{(1)}=-\sum_{i=1}^2 m_i x_{(1)i}-m_3 x_{(0)3};\allowdisplaybreaks\\
&\Xi_{3b}=t\sum_{i=1}^3\frac{\partial}{\partial y_i},\qquad \phi_{(0)}=-\sum_{i=1}^2 m_i y_{(0)i},\quad\phi_{(1)}=-\sum_{i=1}^2 m_i y_{(1)i}-m_3 y_{(0)3};\allowdisplaybreaks\\
&\Xi_{4}=\sum_{i=1}^3\left(\left(y_{(0)i}+\varepsilon y_{(1)i}\right)\frac{\partial}{\partial x_i}-\left(x_{(0)i}+\varepsilon x_{(1)i}\right)\frac{\partial}{\partial y_i}\right),\qquad \phi_{(0)}=\phi_{(1)}=0;\allowdisplaybreaks\\
&\Xi_5=\sum_{j=1}^3\sum_{i=1}^2\left(m_i y_{(0)i}+\varepsilon m_i y_{(1)i}\right)\frac{\partial}{\partial x_j}+\varepsilon m_3y_{(0)3}\sum_{j=1}^2\frac{\partial}{\partial x_j}\allowdisplaybreaks\\
&\quad-\sum_{j=1}^3\sum_{i=1}^2\left(m_i x_{(0)i}+\varepsilon m_i x_{(1)i}\right)\frac{\partial}{\partial y_j}-\varepsilon m_3x_{(0)3}\sum_{j=1}^2\frac{\partial}{\partial y_j}, \qquad \phi_{(0)}=\phi_{(1)}=0;\allowdisplaybreaks\\
&\Xi_{6}=\varepsilon \left(\left(y_{(0)2}-y_{(0)1}\right)\left(m_2\frac{\partial}{\partial x_1}-m_1\frac{\partial}{\partial x_2}\right)\right.\allowdisplaybreaks\\
&\qquad+\left.\left(x_{(0)1}-x_{(0)2}\right)\left(m_2\frac{\partial}{\partial y_1}-m_1\frac{\partial}{\partial y_2}\right)\right),\qquad \phi_{(0)}=\phi_{(1)}=0;\allowdisplaybreaks\\
&\Xi_7=\varepsilon\Xi_1,\qquad \phi_{(0)}=\phi_{(1)}=0;\allowdisplaybreaks\\
&\Xi_{8a}=\varepsilon\Xi_{2a},\qquad\Xi_{8b}=\varepsilon\Xi_{2b},\qquad \phi_{(0)}=\phi_{(1)}=0;\allowdisplaybreaks\\
&\Xi_{9a}=\varepsilon t\sum_{i=1}^2\frac{\partial}{\partial x_i},\qquad \phi_{(0)}=0,\quad\phi_{(1)}=-\sum_{i=1}^2m_i x_{(0)i};\allowdisplaybreaks\\
&\Xi_{9b}=\varepsilon t\sum_{i=1}^2\frac{\partial}{\partial y_i},\qquad\phi_{(0)}=0,\quad\phi_{(1)}=-\sum_{i=1}^2m_i y_{(0)i};\allowdisplaybreaks\\
&\Xi_{10}=\varepsilon\sum_{i=1}^2\left(y_{(0)i}\frac{\partial}{\partial x_i}- x_{(0)i}\frac{\partial}{\partial y_i}\right),\qquad \phi_{(0)}=\phi_{(1)}=0;\allowdisplaybreaks\\
&\Xi_{11}=\varepsilon\frac{\partial}{\partial x_3},\qquad \Xi_{12}=\varepsilon\frac{\partial}{\partial y_3},\qquad \phi_{(0)}=\phi_{(1)}=0.
\end{align*}
As far as the non trivial conservation laws are concerned, the results that we obtain are as~follows:
\begin{itemize}
\item from generator $\Xi_1$ we have the first order approximate conservation of total energy
\begin{align*}
\mathcal{I}_1&=\frac12\left(m_1\dot{\mathbf{r}}_{(0)1}^2+m_2\dot{\mathbf{r}}_{(0)2}^2\right)-\frac{G m_1 m_2}{|\mathbf{r}_{(0)12}|}\\
&+\varepsilon\left(\frac12 m_3\dot{\mathbf{r}}_{(0)3}^2+m_1\dot{\mathbf{r}}_{(0)1}\cdot \dot{\mathbf{r}}_{(1)1}+
m_2\dot{\mathbf{r}}_{(0)2}\cdot \dot{\mathbf{r}}_{(1)2}\right.\\
&\left.-\frac{G m_1 m_3}{|\mathbf{r}_{(0)13}|}-\frac{G m_2 m_3}{|\mathbf{r}_{(0)23}|}+\frac{G m_1 m_2}{|\mathbf{r}_{(0)23}|^3}\mathbf{r}_{(0)12}\cdot\mathbf{r}_{(1)12}\right);
\end{align*}
\item from generators $\Xi_{2a}$ and $\Xi_{2b}$, we have the approximate conservation of total linear momentum
\[
\mathcal{I}_2=m_1\dot{\mathbf{r}}_{(0)1}+m_2\dot{\mathbf{r}}_{(0)2}+\varepsilon\left(m_1\dot{\mathbf{r}}_{(1)1}+m_2\dot{\mathbf{r}}_{(1)2}+m_3\dot{\mathbf{r}}_{(0)3}\right);
\]
\item from the generators $\Xi_{3a}$ and $\Xi_{3b}$ we have
\[
\begin{aligned}
\mathcal{I}_3&=m_1(t\dot{\mathbf{r}}_{(0)1}-\mathbf{r}_{(0)1})+m_2(t\dot{\mathbf{r}}_{(0)2}-\mathbf{r}_{(0)2})\\
&+\varepsilon\left(m_1(t\dot{\mathbf{r}}_{(1)1}-\mathbf{r}_{(1)1})+m_2(t\dot{\mathbf{r}}_{(1)2}-\mathbf{r}_{(1)2})+m_3(t\dot{\mathbf{r}}_{(0)3}-\mathbf{r}_{(0)3})\right),
\end{aligned}
\]
expressing that the approximate (at order $\varepsilon$) barycenter of the system has a uniform and rectilinear motion;
\item from the generator $\Xi_{4}$ we have the approximate conservation of total angular momentum
\[
\begin{aligned}
\mathcal{I}_4&=m_1\mathbf{r}_{(0)1}\wedge \dot{\mathbf{r}}_{(0)1} +m_2 \mathbf{r}_{(0)2}\wedge \dot{\mathbf{r}}_{(0)2}\\
&+\varepsilon\left(m_1\left(\mathbf{r}_{(0)1}\wedge \dot{\mathbf{r}}_{(1)1}+\mathbf{r}_{(1)1}\wedge \dot{\mathbf{r}}_{(0)1}\right)+m_2\left(\mathbf{r}_{(0)2}\wedge \dot{\mathbf{r}}_{(1)2}+\mathbf{r}_{(1)2}\wedge \dot{\mathbf{r}}_{(0)2}\right)\right.\\
&\left.+m_3 \mathbf{r}_{(0)3}\wedge \dot{\mathbf{r}}_{(0)3}\right).
\end{aligned}
\]
\end{itemize}

Therefore, we recovered the approximate conservation laws we expect. The remaining generators do not yield new independent conserved quantities. In fact:
\begin{itemize}
\item the generator $\Xi_{5}$ produces
\[
\begin{aligned}
\mathcal{I}_5&=\left(m_1\mathbf{r}_{(0)1}+m_2\mathbf{r}_{(0)2}\right)\wedge \left(m_1\dot{\mathbf{r}}_{(0)1}+m_2\dot{\mathbf{r}}_{(0)2}+\varepsilon\left(m_1\dot{\mathbf{r}}_{(1)1}+m_2\dot{\mathbf{r}}_{(1)2}+m_3\dot{\mathbf{r}}_{(0)3}\right)\right)\\
&+\varepsilon\left(m_1\mathbf{r}_{(1)1}+m_2\mathbf{r}_{(1)2}+m_3\mathbf{r}_{(0)3}\right)\wedge
\left(m_1\dot{\mathbf{r}}_{(0)1}+m_2\dot{\mathbf{r}}_{(0)2}\right)=\\
&=\left(\int \mathcal{I}_2 dt\right) \wedge \mathcal{I}_2;
\end{aligned}
\]
\item the generator $\Xi_{6}$ produces 
\[
\mathcal{I}_6=\varepsilon\left(\mathbf{r}_{(0)1}-\mathbf{r}_{(0)2}\right)\wedge \left(\dot{\mathbf{r}}_{(0)1}-\dot{\mathbf{r}}_{(0)2)}\right) = \varepsilon \frac{(m_1 + m_2) \mathcal{I}_4- \mathcal{I}_5}{m_1m_2};
\]
\item the generator $\Xi_7$ produces  $\varepsilon\mathcal{I}_1$; 

\item the generators $\Xi_{8a}$ and $\Xi_{8b}$ produce $\varepsilon\mathcal{I}_2$;

\item the generators $\Xi_{9a}$ and $\Xi_{9b}$ produce  $\varepsilon\mathcal{I}_3$; 

\item the generator $\Xi_{10}$ produces $\varepsilon \mathcal{I}_5$;

\item the generators $\Xi_{11}$ and $\Xi_{12}$ produce only trivial conserved quantities. 
\end{itemize}

\section{Concluding Remarks}
\label{sec:conclusions}
In this paper, we used the recently introduced method for determining approximate Lie symmetries of differential equations~\cite{DSGO:lieapprox} in order to state an approximate Noether theorem and 
explicitly construct conservation laws in the cases where the Lagrangian function involves terms with
a small order of magnitude. The approach to approximate Lie symmetries of differential equations proposed in~\cite{DSGO:lieapprox} is consistent with the classical principles of perturbation analysis
\cite{Nayfeh}, and has some advantages if compared with the BGI and FS approaches.
The approximate Noether theorem has been stated for Lagrangian functions depending on $n$ independent variables, $m$ dependent variables and first order derivatives of the latter with respect to the former, and the structure of approximate conservation laws in terms of the generators of approximate variational Lie symmetries and the Lagrangian itself has been derived. We observe that a generalization of such an approximate Noether theorem to higher order Lagrangians can be immediately deduced. All the needed computations have been done by means of the program ReLie~\cite{Oliveri-relie}.  

The consistent approach to approximate variational Lie symmetries has been illustrated through some ordinary differential equations arising from a Lagrangian function: a nonlinearly perturbed harmonic oscillator, a system of two coupled second order ordinary differential equations, and the three-body planar restricted problem without neglecting the gravitational attraction of the smallest body on the main bodies. 

Work is in progress about the application of the present approximate Noether theorem to higher order Lagrangian functions occurring in field theory, as well as to use our approach to approximate Lie symmetries~\cite{DSGO:lieapprox} with the method of partial Lagrangians~\cite{Kara-Mahomed-2006,Kara-Mahomed-Naeem-Wafo-2007,Johnpillai-Kara-Mahomed-2009}.

\section*{Acknowledgments}
Work supported by G.N.F.M. of ``Istituto Nazionale di Alta Matematica''. 
M.G. acknowledges the support through the ``Progetto Giovani No. GNFM 2020''.


\begin{thebibliography}{999}

\bibitem{Ovsiannikov1982book}
Ovsiannikov, L.V.
\newblock {\em Group Analysis of Differential Equations}; Academic {P}ress: New
  York, NY, USA,  1982.

\bibitem{Olver1986book}
Olver, P.J.
\newblock {\em Applications of {L}ie Groups to Differential Equations};
  Springer: New York, NY, USA,  1986.

\bibitem{Ibragimov1994CRC1}
Ibragimov, N.H. (Ed.)
\newblock {\em {CRC} {H}andbook of {L}ie Group Analysis of Differential
  Equations, Volume 1. Symmetries, {E}xact {S}olutions, and {C}onservation
  {L}aws}; CRC Press: Boca Raton, FL, USA,  1994.

\bibitem{Ibragimov1995CRC2}
Ibragimov, N.H.~(Ed.)
\newblock {\em {CRC} {H}andbook of {L}ie Group Analysis of Differential
  Equations, Volume 2. Applications in Engineering and Physical Sciences}; CRC
  Press: Boca Raton, FL, USA,  1995.

\bibitem{Ibragimov1996CRC3}
Ibragimov, N.H.~(Ed.)
\newblock {\em {CRC} {H}andbook of {L}ie Group Analysis of Differential
  Equations, Volume 3. New Trends in Theoretical Developments and Computational
  Methods}; CRC Press: Boca Raton, FL, USA,  1996.

\bibitem{BlumanAnco2002book}
Bluman, G.W.; Anco, S.C.
\newblock {\em Symmetry and Integration Methods for Differential Equations};
  Springer: New York, NY, USA,  2002.

\bibitem{BlumanCheviakovAnco2009book}
Bluman, G.W.; Cheviakov, A.F.; Anco, S.C.
\newblock {\em Applications of Symmetry Methods to Partial Differential
  Equations}; Springer: New York, NY, USA,  2009.

\bibitem{Marsden-book}
Marsden, J.E.; Ratiu, T.; Abraham, R.
\newblock {\em Manifolds, {T}ensor {A}nalysis and {A}pplications};
  Springer: New York, NY, USA,  2001.

\bibitem{Noether}
Noether, E.
\newblock Invariante variationsprobleme. In {\em Nachrichten von der
  Gesellschaft der Wissenschaften zu G\"ottingen.
  Mathematisch-Physikalische Klasse}; Weidmannsche Buchhandlung: Berlin,
  Germany,  1918; pp. 235--257; English Translation in \emph{Transp.} \emph{Theory} \emph{Stat.} \emph{Phys}. \textbf{1971}, \emph{1}, 186--207. 

\bibitem{Nayfeh}
Nayfeh, A.H.
\newblock {\em Introduction to Perturbation Techniques}; Wiley: New York, NY,
  USA,  1981.

\bibitem{BGI-1989}
Baikov, V.A.; Gazizov, R.I.; Ibragimov, N.K.
\newblock Approximate symmetries.
\newblock {\em Matematicheskii Sbornik} {\bf 1988}, {\em 136},~435--450.

\bibitem{IbragimovKovalev}
Ibragimov, N.H.; Kovalev, V.K.
\newblock {\em Approximate and Renormgroup Symmetries}; Higher Education Press:
  Beijing, China; Springer--Verlag GmbH: Berlin/Heidelberg, Germany,  2009.

\bibitem{FS-1989}
Fushchich, W.I.; Shtelen, W.H.
\newblock On approximate symmetry and approximate solutions of the non--linear
  wave equation with a small parameter.
\newblock {\em J. Phys. A Math. Gen.} {\bf
  1989}, {\em 22},~887--890.

\bibitem{Wiltshire1996}
Wiltshire, R.J.
\newblock Perturbed {L}ie symmetry and systems of non-linear diffusion
  equations.
\newblock {\em Nonlinear Math. Phys.} {\bf 1996}, {\em
  3},~130--138.

\bibitem{BaikovKordyukova2003}
Baikov, V.A.; Kordyukova, S.A.
\newblock Approximate symmetries of the {B}oussinesq equation.
\newblock {\em Quaest. Math.} {\bf 2003}, {\em 26},~1--14.

\bibitem{DolapciPakdemirli2004}
Dolap\c{c}i, I.T.; Pakdemirli, M.
\newblock Approximate symmetries of creeping flow equations of a second grade
  fluid.
\newblock {\em Int. J. Non-Linear Mech.} {\bf 2004}, {\em
  39},~1603--1618.

\bibitem{Wiltshire2006}
Wiltshire, R.
\newblock Two approaches to the calculation of approximate symmetry exemplified
  using a system of advection--diffusion equations.
\newblock {\em J. Comput. Appl. Math.} {\bf
  2006}, {\em 197},~287--301.

\bibitem{GazizovIbragimovLukashchuk2010}
Gazizov, R.K.; Ibragimov, N.H.; Lukashchuk, V.O.
\newblock Integration of ordinary differential equation with a small parameter
  via approximate symmetries: Reduction of approximate symmetry algebra to a
  canonical form.
\newblock {\em Lobachevskii J. Math.} {\bf 2010}, {\em
  31},~141--151.

\bibitem{GazizovIbragimov2014}
Gazizov, R.K.; Ibragimov, N.H.
\newblock Approximate symmetries and solutions of the {K}ompaneets equation.
\newblock {\em J. Appl. Mech. Tech. Phys.} {\bf
  2014}, {\em 55},~220--224.

\bibitem{Euler1}
Euler, N.; Shulga, M.W.; Steeb, W.H.
\newblock Approximate symmetries and approximate solutions for a
  multi-dimensional {L}andau-{G}inzburg equation.
\newblock {\em J. Phys. Math. Gen.} {\bf
  1992}, {\em 25},~1095--1103.

\bibitem{Euler2}
Euler, M.; Euler, N.; K\"{o}hler, A.
\newblock On the construction of approximate solutions for a multidimensional
  nonlinear heat equation.
\newblock {\em J. Phys. Math. Gen.} {\bf
  1994}, {\em 27},~2083--2092.

\bibitem{Euler3}
Euler, N.; Euler, M.
\newblock Symmetry properties of the approximations of multidimensional
  generalized {V}an der {P}ol equations.
\newblock {\em J. Nonlinear Math. Phys.} {\bf 1994},
  {\em 1},~41--59.

\bibitem{Diatta}
Diatta, B.; Wafo~Soh, C.; Khalique, C.M.
\newblock Approximate symmetries and solutions of the hyperbolic heat equation.
\newblock {\em Appl. Math. Comput.} {\bf 2008}, {\em
  205},~263--272.

\bibitem{Govinder-1998}
Govinder, K.; Heil, T.; Uzer, T.
\newblock Approximate Noether symmetries.
\newblock {\em Phys. Lett. A} {\bf 1998}, {\em 240},~127--131.

\bibitem{IbragimovUnalJogreus2004}
Ibragimov, N.H.; {\"U}nal, G.; Jogr\'eus, C.
\newblock Approximate symmetries and conservation laws for It\^{o} and
  Stratonovich dynamical systems.
\newblock {\em J. Math. Anal. Appl.} {\bf
  2004}, {\em 297},~152--168.

\bibitem{DSGO:lieapprox}
Di~Salvo, R.; Gorgone, M.; Oliveri, F.
\newblock A consistent approach to approximate {L}ie symmetries of differential
  equations.
\newblock {\em Nonlinear Dyn.} {\bf 2018}, {\em 91},~371--386.

\bibitem{Valenti}
Valenti, A.
\newblock Approximate symmetries for a model describing dissipative media.
\newblock  In Proceedings of the 10th {I}nternational {C}onference in {M}odern {G}roup
  {A}nalysis, Larnaca, Cyprus, 24--31 October 2005; pp. 236--243. 


\bibitem{Gorgone-2018}
Gorgone, M.
\newblock Approximately invariant solutions of creeping flow equations.
\newblock {\em Int. J. Non-Linear Mech.} {\bf 2018}, {\em
  105},~212--220.

\bibitem{GorgoneOliveriEDJE-2018}
Gorgone, M.; Oliveri, F.
\newblock Approximate Q-conditional symmetries of partial differential
  equations.
\newblock {\em Electron. J. Differ. Equ.} {\bf 2018}, {\em
  25},~133--147.

\bibitem{GorgoneOliveriZAMP-2021}
Gorgone, M.; Oliveri, F.
\newblock Consistent approximate Q-conditional symmetries of PDEs: Application
  to a hyperbolic reaction-diffusion-convection equation.
\newblock {\em Z. Angew. Math. Phys.} {\bf
  2021}, {\em 72},~119.

\bibitem{Oliveri-relie}
Oliveri, F.
\newblock ReLie: A {R}educe package for {L}ie group analysis of differential
  equations.
\newblock {\em Symmetry} {\bf 2021}, {\em 13},~1826.

\bibitem{Reduce}
Hearn, A.C.; Sch\"{o}pf, R.
\newblock {\em Reduce {U}ser's {M}anual}. {F}ree {V}ersion,  2021.
\newblock Available onlinre: \url{https://reduce-algebra.sourceforge.io/}(accessed on 13 September 2021). 


\bibitem{Baikov-1994}
Baikov, V.A.; Gazizov, R.K.; Ibragimov, N.H.; Mahomed, F.M.
\newblock Closed orbits and their stable symmetries.
\newblock {\em J. Math. Phys.} {\bf 1994}, {\em
  35},~6525--6535.

\bibitem{Campoamor-Stursberg-2016}
Campoamor-Stursberg, R.
\newblock Perturbations of Lagrangian systems based on the preservation of
  subalgebras of Noether symmetries.
\newblock {\em Acta Mech.} {\bf 2016}, {\em 227},~1941--1956.

\bibitem{Kara-Mahomed-2006}
Kara, A.H.; Mahomed, F.M.
\newblock Noether-type symmetries and conservation laws via partial
  Lagrangians.
\newblock {\em Nonlinear Dyn.} {\bf 2006}, {\em 45},~367--383.

\bibitem{Kara-Mahomed-Naeem-Wafo-2007}
Kara, A.H.; Mahomed, F.M.; Naeem, I.; Wafo~Soh, C.
\newblock Partial Noether operators and first integrals via partial
  Lagrangians.
\newblock {\em Math. Methods Appl. Sci.} {\bf 2007}, {\em
  30},~2079--2089.

\bibitem{Johnpillai-Kara-Mahomed-2009}
Johnpillai, A.G.; Kara, A.H.; Mahomed, F.M.
\newblock Approximate Noether-type symmetries and conservation laws via partial
  Lagrangians for PDEs with a small parameter.
\newblock {\em J. Comput. Appl. Math.} {\bf 2009},
  {\em 223},~508--518.

\end{thebibliography}
\end{document}